\renewcommand{\cite}[1]{\citeasnoun{#1}}
\newtheorem{theorem}{Theorem}
\newtheorem{remark}[theorem]{Remark}
\newtheorem{assumption}[theorem]{Assumption}
\newcommand{\rref}[1]{{\rm (\ref{#1})}}
\newcommand{\cF}{{\cal F}}
\newcommand{\cL}{{\cal L}}
\newcommand{\cO}{{\cal O}}
\newcommand{\cX}{{\cal X}}
\newcommand{\EE}{\mathbb{E}}
\newcommand{\E}{\EE}
\newcommand{\condexp}[1]{ \E \left[ \left.#1 \right| {\cal F}_{t} \right] }
\newcommand{\qed}{\mbox{ }~\hfill~$\Box$ \vspace{1ex} }
\newenvironment{proof}{\noindent{\sc Proof : }}{ \qed }
\newenvironment{rmenumerate}
  {\begin{enumerate}}
  {\end{enumerate}}
\newcommand{\frederik}[1]{{  #1}}
\begin{document}

\title{Existence of Financial Equilibria in Continuous Time with Potentially Complete Markets}

\author{Frank Riedel\thanks{We gratefully acknowledge financial support by the German Research Foundation (DFG) via the International Research Training Groups ``Economic Behavior and Interaction Models''  and ``Stochastics and Real World Models''. } \and Frederik Herzberg}
\date{Institute of Mathematical Economics\\
Bielefeld University}
\maketitle

\begin{abstract}
\noindent We prove that in smooth Markovian continuous--time economies with potentially complete asset markets, Radner equilibria with endogenously complete markets exist.
\end{abstract}

\medskip
{\it \hspace*{0.6cm} JEL subject classification:} D52, D53, G12

\medskip
{\it \hspace*{0.6cm} 2010 Mathematics Subject Classification:} 91B50, 91G80

\medskip
{\it \hspace*{0.6cm} Keywords:} Potentially complete market, Continuous-time financial market, Radner equilibrium, It\^o diffusion

\section*{Introduction}

The hallmark of economics is still the general theory of competitive markets as expressed masterfully in the work of Arrow and Debreu. While this theory can be considered as complete, its extension to competitive markets under uncertainty in continuous time remains still imperfect. In discrete models, it is well known that for potentially complete markets of real assets, one generically has a Radner equilibrium with endogenously generated complete markets that implement the efficient allocation of the corresponding  Arrow--Debreu equilibrium, see \cite{MagillShafer85} or \cite{MagillQuinzii98}, Theorem 25.7.

  \cite{AndersonRaimondo08} prove  a version of this theorem for  specific continuous--time economies where endowments and dividends are smooth functions of Brownian motion and time, and  agents have time--separable expected utility functions.
They establish their result with the help of nonstandard analysis, an intriguing approach to analysis and stochastics via mathematical logic that allows, e.g., to work with infinitely large and infinitesimally small    numbers, and to identify Brownian motion with a random walk of infinite length and infinitesimally small time steps. We believe that such an important theorem deserves a standard proof -- we provide it here\footnote{In independent work,
\frederik{\cite{Hugonnieretal2012}} prove a remarkably similar result to ours. In their first version \frederik{(which eventually developed into \cite{Hugonnieretal2011})}, these authors \emph{assumed} that equilibrium state prices are analytic. Their  last version, which appeared after our paper was available, proves this assumption on endogenous objects, as we do and did in all versions of our paper. We think that our treatment is clearer from an economic point of view, we are more to the point, and we have exact references. We thus hope that our paper provides an interesting reading for our readers. A very elegant generalization of the crucial mathematical part of the analysis can be found in a recent working paper by \cite{KramkovPredoiu11}. This paper is motivated by lectures on General Equilibrium Theory that one of the authors (Frank Riedel) gave at Carnegie Mellon University in 2008.
}.

At the same time, we extend the result to more general classes of state variables.
Many finance models nowadays rely on more general diffusions; prominent examples include the stochastic volatility models, where the volatility of the risky asset is a mean--reverting process as in \cite{Heston93},  term structure models like \cite{Vasicek77} or more generally affine term structure models as in \cite{DuffiePanSingleton00}. It is thus important to have sound equilibrium foundations for such models as well.

The paper is set up as follows. The next section describes a smooth continuous--time Markov economy where all relevant functions are analytic on the open interior of their domain. In this paper, the term ``analytic'' (=real analytic) refers to infinitely differentiable functions that can be written locally as an infinite power series\footnote{Our reference is \cite{KrantzParks02}.}.
Then, we formulate our main theorem on existence of a Radner equilibrium with endogenously dynamically complete markets. The proof is split in several steps. We first recall Dana's \citeyear{Dana93} result on existence of an Arrow--Debreu equilibrium and show that in our setup, allocation and prices are analytic functions of time and the state variable. The natural candidates for security prices are the expected present values of future dividends. We show that these can also be expressed as analytic functions  of time and the state variable if natural assumptions on the coefficients of the diffusion are satisfied. On the one hand, if one has a closed--form version of the state variable's transition density, the result holds true. This is straightforward to check in the case of Brownian motion, or mean--reverting diffusions, e.g. From an abstract point of view, it is better to have conditions on the primitive of the model that ensure such a nice transition density. We state sufficient conditions on the drift and dispersion coefficients of our state variable for such a result.

The analyticity of security prices allows us to extend the local independence assumption on terminal dividends to security prices, proving dynamic completeness, as in \cite{AndersonRaimondo08}.  The implementation of the Arrow--Debreu equilibrium as a Radner equilibrium is then standard.

\section{A Diffusion Exchange Economy with Potentially Complete Asset Markets}

In this section, we set up an exchange economy in continuous time where the relevant information is generated by a diffusion $X=(X_t)_{t\in[0,T]}$ with values in $\mathbb R^K$. It is well known that one needs at least $K+1$ financial assets to span a dynamically complete market. We thus assume that this necessary condition is satisfied. The market is thus \emph{potentially complete}. Below, we  show that in sufficiently smooth economies a Radner equilibrium with dynamically complete markets exists.

\subsection{The State Variables}
 Let $W$ be a $K$--dimensional Brownian motion on a complete probability space $(\Omega, \cF, P)$. Denote by $\left(\cF_t\right)_{t \ge 0}$ the filtration generated by $W$ augmented by the null sets. We assume that the relevant economic information can be described by the state of a diffusion process $X$ with values in $\mathbb R^K$ given by
\begin{equation}
  \label{EqnX}
X_0=x, dX_t = b(X_t) dt + \sigma(X_t) dW_t \,,
\end{equation}
for an initial state $x \in \mathbb R^K$ and measurable functions
$$b : \mathbb R^K \to \mathbb R^K$$ and
$$\sigma : \mathbb R^K \to \mathbb R^{K \times K}$$ that are called the drift and dispersion function, resp. We let
$$a(x):=\sigma(x) \sigma(x)^T$$ be the diffusion matrix.

\begin{assumption}
  \label{AssSDE}
  \begin{enumerate}
    \item $b$ and $\sigma$ are Lipschitz--continuous:
there exist $L,M>0$ such that for all $x,y \in \mathbb R^K$
$$\left\| b(x)-b(y) \right\| \le L \left\|x-y\right\|, \left\| \sigma(x)-\sigma(y) \right\| \le L \left\|x-y\right\|$$
\item
The diffusion matrix satisfies the uniform ellipticity condition
\begin{equation}\label{EqnUniformElliptic}
   \left\| x \cdot a(x) x \right\| \ge \epsilon \left\|x\right\|^2
 \end{equation} for some $\epsilon>0$.
  \end{enumerate}
\end{assumption}

 Part 1 of the assumption ensures that the stochastic differential equation has a unique strong solution and so our state variable is well--defined. The uniform ellipticity condition \rref{EqnUniformElliptic} ensures that there is enough volatility in every state and the diffusion does not degenerate to a locally deterministic process; in particular, it ensures that the distribution of $X$ has full support, see \cite{StroockVaradhan72}.

\begin{assumption}\label{AssSDECoefficientsBoundedHoelderContinuous}
$b$ and $\sigma$ as well as its derivatives are bounded, H\"{o}lder--continuous, and analytic functions.
\end{assumption}

\subsection{Commodities and Agents}

There is one physical commodity in the economy.
Our agents consume a flow $(c_t)_{0 \le t < T}$ and a lump-sum $c_T$ of that commodity at terminal time $T$. We introduce the measure $\nu=dt \otimes \delta_T$, the product of  the Lebesgue measure on $[0,T]$ and the Dirac measure on $\{T\}$. This allows us to model the consumption plans succinctly as one process $c=(c_t)_{0\le t \le T}$ in the following way.
 The commodity  space $\cX$ consists of $p$--integrable consumption rate processes and a $p$--integrable terminal  lump sum consumption for some $p\ge 1$,
$$\cX= L^p\left( \Omega \times [0,T], \cO, P \otimes \nu\right)\,.$$
The consumption set is the positive cone $\cX_+$.
We will use occasionally the dual space of $\cX$ that we shall call the price space
$$\Psi = L^q\left( \Omega \times [0,T], \cO, P \otimes \nu\right)$$
for $q$ with $1/q+1/p=1$.

There are $i=1,\ldots,I$ agents
with time--separable expected utility preferences of the form
$$U^i(c)= \E \int_0^T u^i\left( t, c_t\right) \nu(dt)$$
for a period utility function
$$u^i : [0,T] \times \mathbb R_+ \to \mathbb R \,.$$

\begin{assumption}
\label{AssUtilityGeneral}
  The period utility functions $u^i$ are continuous on $[0,T] \times \mathbb R_+$ and analytic on $(0,T)\times \mathbb R_{++}$. They are  differentiably strictly increasing and differentiably strictly concave in consumption on $[0,T]\times\mathbb R_{++}$, i.e.
  $$ \frac{\partial u^i}{\partial c} \left(t,c\right) >0, \frac{\partial^2 u^i}{\partial c^2} (t,c) < 0 \,.$$ They satisfy the Inada conditions
  $$\lim_{c\downarrow 0}  \frac{\partial u^i}{\partial c} \left(t,c\right)=\infty$$
  and
  $$\lim_{c\to\infty} \frac{\partial u^i}{\partial c} \left(t,c\right)=0$$
  uniformly in $t\in[0,T]$.
\end{assumption}

\begin{assumption}\label{AssEntitlementAnalytic}
Each agent comes with a $P\otimes \nu$--strictly positive entitlement\footnote{We use the word ``entitlement'' here to distinguish it from the total initial endowment used below which is the sum of the entitlement and the dividends of assets initially owned by the agent.} $e^i \in \cX_+$ that can be written as a function of the state variables:
 $$e^i_t= e^i\left( t, X_t\right) $$ for
 continuous functions $ e^i : [0,T] \times \mathbb R^{K} \to \mathbb R$, $i=1,\ldots,I$.
  The functions $e^i$ are analytic on $(0,T)\times \mathbb R^K.$
\end{assumption}

\subsection{The Financial Market}

\begin{assumption}\label{AssDividendAnalytic}
There are $K+1$ financial assets. These are \emph{real} assets in the sense that they pay dividends in terms of the underlying physical commodity. The assets' dividends can be written as
$$ A^k_t= g^k\left(t,X_t\right), t \in [0,T]$$
for continuous functions $ g^k : [0,T] \times \mathbb R^{K} \to \mathbb R_+$, $k=0,\ldots,K$. As for consumption processes, we interpret dividends as a flow on $[0,T)$ plus a lump sum payment at time $T$.

 The dividends belong to the consumption set, $A^k \in\cX_+$.  The functions $g^k$ are analytic on $(0,T)\times \mathbb R^K.$
  Asset $0$ is a real zero--coupon bond with maturity $T$; it has no intermediate dividends, i.e. $A^0_t=0$ for $t<T$\footnote{We can also work with intermediate dividends. In that case, an additional small detour  is necessary in  order to construct a suitable num\'{e}raire asset. As this part is not at the heart of the present analysis, we do not present this generalization here. The argument is available from the authors. }.
\end{assumption}

Agent $i$ owns initially $n^i_{k}\ge 0$ shares of asset $k$. Without any trade, the agent is thus endowed with his individual endowment
$$\varepsilon^i_t=e^i_t + n^i \cdot A_t\,.$$ We denote by $N_k=\sum_{i=1}^I n^i_k$ the total number of shares in asset $k$. The aggregate endowment of agents is then
$$\varepsilon_t = \sum_{i=1}^I e^i_t + \sum_{k=0}^K N_k A^k_t = \sum_{i=1}^I \varepsilon^i_t  \,.$$

A consumption price process is a positive It\^{o} process $\psi$.
A (cum--dividend) security price for asset $k$ is a nonnegative It\^{o} process $S^k=\left(S^k_t\right)_{0\le t\le T}$. We interpret $S^k$ as the nominal price of the asset $k$.
We denote by $$G^k_t = S^k_t + \int_{[0,t)} A^k_s  \psi_s \nu(ds) ,\qquad(0\le t \le T)
$$
 the (nominal) gain process for asset $k$.
Note that by no arbitrage we must have $S^k_T=A^k_T$ at maturity.

A portfolio process is a predictable process $\theta$ with values in $\mathbb R^{K+1}$ that is $G$--integrable, i.e.
the stochastic integrals $\int_0^t \theta^k_u dG^k_u$ are well--defined. The value of such a portfolio is
$V_t=\theta \cdot S$.

We call a portfolio admissible (without reference to an agent) if its value process is bounded below by a martingale.
This admissibility condition rules out doubling strategies\footnote{Anderson and Raimondo use a martingale condition to rule out such strategies. This requires to impose a martingale condition on potential security prices. As this martingale property is a consequence of equilibrium, we prefer not to impose this assumption ex ante. Nevertheless, either way works here.}.

A portfolio is \emph{admissible for agent $i$} if its present value plus the present value of the agent's endowment is nonnegative, or
$$V_t +  \condexp{\int_{t+}^T e^i_s \psi_s \nu(ds) } \ge 0 \,.$$
 Note that this implies $V_T\ge 0$ for the terminal value of the portfolio.

A portfolio $\theta$ finances a consumption plan $c\in\cX_+$ for agent $i$ if $\theta$ is admissible for agent $i$ and the intertemporal budget constraint is satisfied for the associated value process $V$:
$$V_t = n^i \cdot S_0 + \int_0^t \theta_u dG_u + \int_0^t \left(e^i_u-c_u\right) \psi_u \nu(du) \,.$$ We then call the portfolio/consumption pair $(\theta, c)$ $i$--feasible.
More generally, we say that a portfolio $\theta$ finances a net consumption plan $z \in \cX$ if its value process satisfies
$$V_t = V_0 + \int_0^t \theta_u dG_u + \int_0^t \left(e^i_u-c_u\right) \psi_u \nu(du) \,.$$

A Radner equilibrium consists of asset prices $S$, a consumption price $\psi$, portfolios $\theta^i$ and consumption plans $c^i\in\cX_+$ for each agent $i$ such that
$\theta^i$ is admissible for agent $i$ and finances $c^i$, $c^i$ maximizes agent $i$'s utility over all such $i$--feasible portfolio/consumption pairs, and markets clear, i.e.
$\sum_{i=1}^I c^i = \varepsilon $ and $\sum_{i=1}^I \theta^i = N$.

Our way to a Radner equilibrium with dynamically complete markets will lead over the intermediate step of an Arrow--Debreu equilibrium. For the existence of such an equilibrium, the following assumption is, in general, necessary\footnote{Assumption \ref{margfel} cannot be weakened in general. Assume
that there is only one agent. Then, to establish a no-trade
equilibrium in the Arrow-Debreu sense, it is necessary to find a
price
$\psi \in L$ that separates the endowment $e$ from the set
$G =\{ d \in L ; U(d) \geq U(e) \}$ of consumption streams preferred to $e$.
The only candidate in a smooth model like this one for such a
price process is the marginal felicity $\frac{\partial}{\partial
c} u(t,e_t)$. If it is not square-integrable, then there exists no
equilibrium. For more on the necessity of Assumption
\ref{margfel}, the reader may consult the overview of
\cite{MasColellZame91}, especially Example 6.5, and the paper of  \cite{AraujoMonteiro91}, where it is shown  that an
equilibrium does generically not exist if one does not have a
condition on the integrability of marginal felicities like
Assumption \ref{margfel}.}:

\begin{assumption}\label{margfel}
\begin{rmenumerate}
  \item For each agent, the marginal utility of his endowment belongs to the price space $\Psi$:
$$  \frac{\partial}{\partial c} u^i(t,\varepsilon^i_{t}) \in \Psi \,.$$
\item
Aggregate endowment $\varepsilon$ is bounded and bounded away from zero.
\end{rmenumerate}
\end{assumption}

Note that Assumption \ref{margfel} ii) implies Part i).

If the assets are linearly dependent, there is no hope to span a dynamically complete market. To exclude this, we follow \cite{AndersonRaimondo08} and impose a full rank condition on terminal payoffs:

\begin{assumption}\label{AssIndependentPayoffs}
  On a nonempty open set $V \subset \mathbb R^K$, the dividend of the zero--th asset is strictly positive at maturity,
  $$g^0(T,x)>0, \qquad (x\in V)\,.$$
  The functions $h^k: x \mapsto \frac{g^k(T,x)}{g^0(T,x)}$ are continuously differentiable on $V$ for $k=1,\ldots, K$ and the Jacobian matrix
  $$Dh (x) = \begin{pmatrix}
               \frac{\partial h^1(T,x)}{\partial x_1} & \ldots & \frac{\partial h^1(T,x)}{\partial x_K} \\
               \vdots & \ddots & \vdots \\
               \frac{\partial h^K(T,x)}{\partial x_1} & \ldots & \frac{\partial h^K(T,x)}{\partial x_K} \\
             \end{pmatrix}
  $$ has full rank on $V$.
\end{assumption}

\section{Existence of Radner Equilibrium with Dynamically Complete Markets}

We are now in the position to state our main result. We call the  market given by the asset prices $S$, dividends $A$, and consumption price $\psi$ dynamically complete if  every net consumption plan $z \in \cX$ can be financed by an admissible portfolio $\theta$ in the sense that its value process satisfies
$$V_t=V_0+ \int_0^t \theta_u dG_u + \int_0^t z_u \psi_u \nu(du) \,.$$

\begin{theorem}\label{ThmMain} Under Assumptions 1 to 5, \ref{margfel} ii), and 7,  there exists a Radner equilibrium $\left(S,\psi,\left(\theta^i,c^i\right)_{i=1,\ldots,I}\right)$ with
   a dynamically complete market $(S,A,\psi)$; the prices and dividends are linked by the present value relation
   \begin{equation}
    \label{EqnSecurityPrice}
S^k_t=\condexp{\int_t^T A^k_s \psi_s \,  \nu(ds)}\,.
  \end{equation}
  \end{theorem}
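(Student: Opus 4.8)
The plan is to implement the classical ``Arrow--Debreu first, then decentralize as Radner'' strategy, exploiting analyticity at every stage to get endogenous completeness. First I would invoke Dana's \citeyear{Dana93} existence theorem for an Arrow--Debreu equilibrium in this $L^p$/$L^q$ duality setting: Assumptions \ref{AssUtilityGeneral} and \ref{margfel} ii) (boundedness of aggregate endowment away from zero, Inada conditions, strict concavity) give a complete-markets equilibrium consisting of a state price density $\psi \in \Psi$ and an efficient allocation $(c^i)$. Because the equilibrium allocation is Pareto optimal, there are positive welfare weights $(\lambda_i)$ such that $c^i_t$ solves the static planner's problem pointwise; the first-order conditions $\lambda_i \partial_c u^i(t,c^i_t) = \psi_t$ together with aggregate feasibility $\sum_i c^i_t = \varepsilon_t$ express each $c^i_t$ and $\psi_t$ as functions of $(t,\varepsilon_t)$ via the inverse-marginal-utility map. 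Since $\varepsilon_t = \varepsilon(t,X_t)$ for an analytic $\varepsilon$ (Assumptions \ref{AssEntitlementAnalytic}, \ref{AssDividendAnalytic}) and the inverse marginal utilities are analytic on the interior by the analytic implicit function theorem applied to Assumption \ref{AssUtilityGeneral}, I get $c^i_t = c^i(t,X_t)$ and $\psi_t = \psi(t,X_t)$ with $c^i,\psi$ analytic on $(0,T)\times\mathbb R^K$ and bounded (using Assumption \ref{margfel} ii) and the Inada conditions to bound $\psi$ away from $0$ and $\infty$).

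Next I would define the candidate security prices by the present-value formula \rref{EqnSecurityPrice}, $S^k_t = \condexp{\int_t^T A^k_s\psi_s\,\nu(ds)}$, and show $S^k_t = S^k(t,X_t)$ for a function analytic on $(0,T)\times\mathbb R^K$. Here is where Assumption \ref{AssSDECoefficientsBoundedHoelderContinuous} (bounded, H\"older, analytic coefficients) and the uniform ellipticity of Assumption \ref{AssSDE} enter: they guarantee via the parametrix construction (cf.\ the references alluded to in the introduction, and \cite{StroockVaradhan72}) that $X$ has a transition density $p(s,x;t,y)$ which is jointly smooth and, crucially, \emph{analytic in $x$}. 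Then $S^k(t,x) = \int_t^T \!\!\int g^k(s,y)\psi(s,y)\,p(t,x;s,y)\,dy\,\nu(ds) + \mathbf 1_{\{t<T\}}$-type terminal term, and differentiation under the integral sign — justified by the Gaussian-type bounds on $p$ and its derivatives and the boundedness of $g^k\psi$ — transfers analyticity in $x$ from the kernel to $S^k$. The time-regularity and the It\^o-process property of $S^k$ follow from the Feynman--Kac/Kolmogorov PDE that $S^k$ satisfies on $(0,T)\times\mathbb R^K$. I expect this step — establishing joint analyticity of $S^k$ in the spatial variable with enough uniformity to differentiate under the integral — to be the main technical obstacle; everything else is either cited or routine.

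With analytic security prices in hand, I would extend the full-rank condition from terminal dividends to the price system. By Assumption \ref{AssIndependentPayoffs}, the matrix of spatial gradients of the normalized \emph{terminal} payoffs $h^k = g^k(T,\cdot)/g^0(T,\cdot)$ has full rank on the open set $V$. Since $S^k_T = A^k_T = g^k(T,X_T)$, the dispersion matrix of the gain processes at a time slightly before $T$ is, up to the nonsingular factor $\sigma(X_t)$, close to $Dh(X_t)$; an analyticity argument (a nonzero analytic function — here the relevant determinant — cannot vanish on a set of positive measure, given that $X_t$ has full support by ellipticity) upgrades ``full rank near $T$ on $V$'' to ``full rank $P\otimes dt$-a.e.\ on $[0,T]$.'' This is exactly the completeness mechanism of \cite{AndersonRaimondo08}: the volatility matrix of $(G^0,\dots,G^K)$ is $P\otimes dt$-a.e.\ invertible, so the market $(S,A,\psi)$ is dynamically complete and every net trade $z\in\cX$ is financed by an admissible portfolio via the martingale representation theorem applied to the $\psi$-deflated wealth process.

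Finally I would decentralize: given completeness, the Arrow--Debreu budget set of each agent coincides with the Radner $i$-feasible set, so the Arrow--Debreu optimal consumption $c^i$ is attainable by some portfolio $\theta^i$ financing it, and the portfolio is admissible for agent $i$ because $\psi$-deflated wealth plus deflated future endowment is a nonnegative martingale (its terminal value is $\psi_T c^i_T\ge 0$). Optimality of $c^i$ in the Radner problem follows from optimality in the Arrow--Debreu problem together with the budget-set equivalence. Market clearing for consumption, $\sum_i c^i = \varepsilon$, is inherited from the Arrow--Debreu equilibrium; market clearing for portfolios, $\sum_i \theta^i = N$, follows from the uniqueness of the replicating portfolio (again completeness) applied to the aggregate net trade, which is zero. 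This yields the claimed Radner equilibrium with the present-value relation \rref{EqnSecurityPrice} holding by construction.
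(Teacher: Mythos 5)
Your overall architecture coincides with the paper's: Dana's Arrow--Debreu equilibrium made analytic through the pointwise planner problem and the analytic implicit function theorem; candidate prices defined by \rref{EqnSecurityPrice}; analyticity plus Assumption \ref{AssIndependentPayoffs} to force a.e.\ invertibility of the volatility matrix after taking asset $0$ as num\'eraire; and a standard decentralization. Steps one, three and four are essentially the paper's argument. The genuine gap is precisely the step you yourself flag as the main obstacle: analyticity of $s^k(t,x)=\condexp{\int_t^T A^k_s\psi_s\,\nu(ds)}$ evaluated at $X_t=x$. Your route --- the transition density $p$ is ``analytic in $x$ by the parametrix construction,'' then differentiate under the integral --- is not supported by the tools you invoke. \cite{StroockVaradhan72} gives the support theorem and well-posedness, not kernel analyticity; the parametrix method under bounded H\"older coefficients yields a fundamental solution with Gaussian bounds on finitely many derivatives, which is far from spatial analyticity. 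Even granting pointwise analyticity of $p(t,x;s,y)$ in $x$ (itself a nontrivial theorem under Assumption \ref{AssSDECoefficientsBoundedHoelderContinuous}), transferring analyticity through the $dy\,\nu(ds)$ integral requires locally uniform control of a complex-analytic extension of $p$ in $x$, or derivative bounds with factorial growth control, uniformly in $y$ and in $s$ near $t$ and near $T$; Gaussian-type bounds on $p$ and finitely many of its derivatives do not provide this. The paper sidesteps the kernel entirely: it represents $s^k$ as a classical solution of the Cauchy problem, obtains analyticity in $t$ from analytic-semigroup/sectorial-operator theory, analyticity in $x$ from \cite{FriedmanPDE}, and joint analyticity from Osgood's theorem on bounded separately analytic functions.

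Two further ingredients that your completeness step uses implicitly are never established. First, you need genuine analyticity in $t$ (not the ``time-regularity'' you defer to Feynman--Kac): if the determinant of the volatility matrix vanished on a set of positive $P\otimes dt$ measure, spatial analyticity alone only kills it on the $x$-sections of a positive-measure set of times, and without time-analyticity you cannot propagate the vanishing to times near $T$ to contradict Assumption \ref{AssIndependentPayoffs}. Second, your phrase ``the dispersion matrix slightly before $T$ is close to $Dh(X_t)\sigma$'' is exactly the statement that $\partial s^k/\partial x_l$ is continuous up to $t=T$ with limit $\partial g^k/\partial x_l(T,\cdot)$; this boundary regularity is a separate result (the paper proves it via Theorem 10.3 of \cite{FriedmanPDE}) and does not follow from interior analyticity. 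Without both of these, the contradiction at $T$ --- and hence dynamic completeness --- does not go through, so the core technical content of the theorem remains unproven in your proposal.
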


The proof of this theorem runs as follows. In a first step, we establish the existence of an Arrow--Debreu equilibrium. In the current time--additive setup, this is a result by \cite{Dana02}. We extend her result by showing that in our smooth economy the equilibrium consumption price $\psi$  and the allocation $\left(c^i\right)_{i=1,\ldots,I}$ are analytic functions of time and the state variable.
It is well known that one can implement the Arrow--Debreu equilibrium as a Radner equilibrium if one has dynamically complete markets. With \emph{nominal} assets, this is more or less trivial (see  \cite{DuffieHuang85} and \cite{Huang87}). Here, our assets pay real dividends, and the completeness depends on the endogenous consumption price $\psi$ and cannot be assumed exogenously.

The natural candidates for our asset prices are, of course, the present values of their future dividends as in \rref{EqnSecurityPrice}. We have to show dynamic completeness then. We do this by proving that the (local) linear independence of the dividends at maturity $T$ carries over to the volatility matrix of asset prices. This yields dynamic  completeness. This step needs the intermediate mathematical result that our candidate security prices are analytic functions of time and state variable.

The implementation of the Arrow--Debreu equilibrium as a Radner equilibrium is then standard.

\subsection{Existence of an Analytic Arrow--Debreu Equilibrium}

We quickly recall the notions of classical General Equilibrium Theory.
An allocation is an element $\left(c^i\right)_{i=1,\ldots,I} \in \cX_+^I$. Is is feasible if we have $\sum_{i=1}^I c^i \le \varepsilon$. A price  is a nonnegative, optional process $\psi\in\cX_+$.  It defines a continuous linear price functional $\Psi(c)=\E \int_0^T  c_t \psi_t \, \nu(dt)$ on $\cX$.

An Arrow--Debreu equilibrium consists of a feasible allocation
$\left(c^i\right)_{i=1,\ldots,I}$ and
a price $\psi$ such that  $c^i$ is budget--feasible and optimal for all agents $i=1, \ldots, I$, i.e. $\Psi(c^i)\le\Psi(\varepsilon^i)$,  and for all consumption plans $c \in \cX_+$ the relation
$U^i(c)>U^i(c^i)$ implies $\Psi(c)>\Psi(\varepsilon^i)$.

Existence and uniqueness of Arrow--Debreu equilibria in our separable setting have been clarified by \cite{Dana93}. We recall her existence result and  show the additional refinement that equilibrium price and consumption plans are analytic functions of time and the state variable on $(0,T)\times \mathbb R^K$.

\begin{theorem}\label{ThmArrowDebreuEquilibrium}
  Under Assumptions \ref{AssUtilityGeneral}, \ref{AssEntitlementAnalytic}, and \ref{margfel} i), there exists an Arrow--Debreu equilibrium $\left(\psi, \left(c^i\right)_{i=1,\ldots,I}\right)$ such that
  \begin{align*}
    \psi_t &= \psi(t,X_t)\\
    c^i_t &= c^i(t,X_t)
  \end{align*}
  for some continuous functions $$ \psi, c^i : [0,T] \times \mathbb R^K \to \mathbb R_+$$ that are analytic on $(0,T)\times \mathbb R^K$.
\end{theorem}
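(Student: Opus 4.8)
The plan is to split the argument into two conceptually separate parts. The first is a purely economic step: invoke Dana's existence result to obtain an Arrow--Debreu equilibrium $(\psi,(c^i))$ in the commodity space $\cX = L^p(\Omega\times[0,T],\cO,P\otimes\nu)$. Here I would use Assumptions \ref{AssUtilityGeneral} (strict monotonicity, strict concavity, Inada) and \ref{margfel} i) to verify that the hypotheses of \cite{Dana93} are met, so that a complete-markets equilibrium exists and, by the standard welfare theorems for time-separable economies, the equilibrium allocation is Pareto optimal. The key structural fact I would extract is the representation of the optimal allocation through a Negishi/social-planner problem: there exist strictly positive Pareto weights $(\lambda_i)_{i=1}^I$ such that the equilibrium consumption solves, state by state and date by date, the static concave program $\max \sum_i \lambda_i u^i(t,c^i)$ subject to $\sum_i c^i = \varepsilon_t$. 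By strict concavity this has a unique solution, and the equilibrium price is proportional to the common marginal utility, $\psi_t = \lambda_i \,\partial_c u^i(t,c^i_t)$ for every $i$. Since $\varepsilon_t = \varepsilon(t,X_t)$ is a function of time and state (Assumptions \ref{AssEntitlementAnalytic}, \ref{AssDividendAnalytic}), the same is true of each $c^i_t = c^i(t,X_t)$ and of $\psi_t = \psi(t,X_t)$; continuity on $[0,T]\times\RR^K$ follows from continuity of $\varepsilon$ and of the solution map of the planner's problem (Berge / implicit function theorem using strict concavity).

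The second, and harder, part is upgrading ``continuous function of $(t,x)$'' to ``analytic on $(0,T)\times\RR^K$''. The idea is to characterize the functions $c^i$ and $\psi$ as the solution of a finite system of analytic equations and then apply the (real) analytic implicit function theorem. Concretely, fix $(t,x)$ with $t\in(0,T)$ and $x\in\RR^K$; the planner's first-order conditions say that $(c^1,\dots,c^I)$ together with the multiplier $\psi$ solve
\[
\lambda_i\,\partial_c u^i(t,c^i) = \psi \quad (i=1,\dots,I), \qquad \sum_{i=1}^I c^i = \varepsilon(t,x).
\]
This is a system of $I+1$ equations in the $I+1$ unknowns $(c^1,\dots,c^I,\psi)$, with parameters $(t,x)$ entering only through $t$ (analytically, by Assumption \ref{AssUtilityGeneral}) and through $\varepsilon(t,x)$ (analytically on $(0,T)\times\RR^K$, by Assumptions \ref{AssEntitlementAnalytic} and \ref{AssDividendAnalytic}). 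The map defining the system is real analytic in a neighborhood of any solution point because each $\partial_c u^i$ is analytic on $(0,T)\times\RR_{++}$ and, by the Inada conditions together with Assumption \ref{margfel} ii), the equilibrium consumptions $c^i_t$ stay in a compact subset of $\RR_{++}$ uniformly in $(t,x)$, so we are always on the analytic part of the domain. The Jacobian of the system with respect to $(c^1,\dots,c^I,\psi)$ is the bordered matrix with diagonal block $\mathrm{diag}(\lambda_i\,\partial_{cc}u^i)$, a border of $-1$'s and $+1$'s, and a zero corner; strict concavity, $\partial_{cc}u^i<0$, makes this matrix invertible (its determinant is, up to sign, $\sum_i \prod_{j\ne i}(\lambda_j\partial_{cc}u^j)\neq 0$). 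The real-analytic implicit function theorem then gives that the unique solution branch -- which by uniqueness of the planner's solution must coincide with the equilibrium allocation -- is analytic in $(t,x)$ on $(0,T)\times\RR^K$; hence so is $\psi$.

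I expect the main obstacle to be the uniform control needed to stay in the analytic region: one must rule out that $c^i_t(x)\to 0$ or $\to\infty$ as $x$ ranges over $\RR^K$, since $u^i$ is only assumed analytic on $(0,T)\times\RR_{++}$ and analyticity typically fails at the boundary. This is exactly where Assumption \ref{margfel} ii) (aggregate endowment bounded and bounded away from zero) does the work: since $0 \le c^i_t \le \varepsilon_t$ is bounded above, and since the Inada condition forces each agent's consumption to be bounded below by a strictly positive constant (otherwise that agent's marginal utility, hence $\psi$, would blow up, contradicting $\psi = \lambda_j \partial_c u^j(t,\varepsilon^j)$ with $\varepsilon^j$ bounded above and $\lambda_j>0$), the equilibrium allocation is confined to a fixed compact rectangle in $\RR_{++}^I$ on which everything is genuinely analytic. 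A secondary technical point is the behavior near $t=0$ and $t=T$: the statement only claims analyticity on the open interval $(0,T)$, so the endpoints need not be treated, and continuity up to $\{0,T\}$ is inherited from the continuity of $\varepsilon$ and of the planner's solution map. Once these uniform bounds are in place, the rest is a routine application of the analytic implicit function theorem (see \cite{KrantzParks02}) parametrized smoothly over $(0,T)\times\RR^K$.
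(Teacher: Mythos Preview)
Your proposal is correct and follows essentially the same route as the paper: invoke \cite{Dana93} for existence and the Negishi representation with strictly positive weights, write the planner's first-order conditions \eqref{EqnSocialPlanner1}--\eqref{EqnSocialPlanner2} as a finite analytic system, check nonsingularity of the bordered Jacobian via strict concavity, apply the real-analytic implicit function theorem, and then compose with the analytic aggregate endowment $\varepsilon(t,x)$.

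One remark. You lean on Assumption~\ref{margfel}~ii) to confine the allocation to a fixed compact rectangle in $\RR_{++}^I$, but the theorem as stated assumes only Assumption~\ref{margfel}~i), and the paper's proof does not use part~ii) here. The point is that the analytic implicit function theorem is purely local: at any fixed $(t,x)\in(0,T)\times\RR^K$ with $\varepsilon(t,x)>0$, the Inada conditions force the planner's solution to lie in $\RR_{++}^I$, and the Jacobian is invertible there; this yields analyticity of $(c^1,\ldots,c^I,\psi)$ in a neighborhood of that point, and hence on all of $(0,T)\times\RR^K$ by patching. No uniform-in-$x$ lower bound on the $c^i$ is needed. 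The paper makes this factorization explicit by first obtaining analytic functions $x^i(t,\cdot),\mu(t,\cdot)$ of the \emph{scalar} aggregate endowment on $(0,T)\times(0,\infty)$ and then composing with $\varepsilon(t,x)$; your direct parametrization by $(t,x)$ is equivalent.
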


\begin{proof}
  By \cite{Dana93}, there exists an equilibrium $(\psi, (c^i))$ with $\psi >0$ $P\otimes\nu$--a.s.  and the allocation $(c^i)$  is the solution of  the social planner problem
$$\max_{c \in \cX_+^I, \sum c^i \le \varepsilon} \sum \lambda^i U^i(c^i)$$
for some $\lambda^i >0$\footnote{$\lambda^i =0$ is not possible. This is already implicit in Dana's proof. Here is another argument based on our Assumption \ref{margfel}. For, if, say, $\lambda^1=0$, then $c^1=0$ (by Negishi). By the strict monotonicity of utility functions, $c^1=0$ is an equilibrium demand only if wealth is zero, i.e. $E\int_0^T \psi_t \varepsilon^1_t \nu(dt)=0$. But by Assumption \ref{margfel} and the Inada assumption, $\varepsilon^1>0 $ $P\otimes\nu$--a.s. Hence $E\int_0^T \psi_t \varepsilon^1_t \nu(dt)>0$, a contradiction.}.

 As we have separable utility functions, the social planner's problem can be solved point-- and state--wise; we thus look at the real--valued problem
$$ v(t,x):=\max_{\substack{ \sum_{i=1}^I x^i=x\\ x^i \ge 0, i=1,\ldots, I}}
 \sum_{i=1}^I \lambda^i u^i(t,x^i) \,.$$
By Assumption \ref{AssUtilityGeneral}, the unique solution of the above real--valued maximization problem is characterized by the equations
\begin{align}
  \lambda^i \frac{\partial u^i}{\partial c} \left(t,x^i\right) &= \mu\label{EqnSocialPlanner1}\\
  \sum_{i=1}^I x^i &= x\label{EqnSocialPlanner2}
\end{align}
for some Lagrange parameter $\mu>0$. By \cite{Dana93}, Proposition 2.1, the solution of the above equations is given by continuous functions
$x^i, \mu : [0,T] \times \mathbb R_+ \to \mathbb R_+ $ of $(t,x)$. By the Analytic Implicit Function Theorem and Assumption \ref{AssUtilityGeneral}, these are even analytic  on  $(0,T) \times (0,\infty)$ (see also \cite{AndersonRaimondo08}, page 881). By \cite{Dana93}, we have $c^i_t=x^i\left(t,\varepsilon_t\right)$ and $\psi_t = \mu\left(t,\varepsilon_t\right)$. As aggregate endowment is a function of time and state variable that is continuous on $[0,T] \times \mathbb R_+$ and analytic on $(0,T)\times \mathbb R_+$ (Assumptions \ref{AssEntitlementAnalytic} and \ref{AssDividendAnalytic}), the result follows.
\end{proof}

\subsection{Analytic Security Prices}

We can now conclude the proof of our main theorem \ref{ThmMain}. In particular, we assume Assumptions 1 through 5, 6 ii) and 7.

The natural candidates for security prices are, of course, the present values of their future dividends, or \rref{EqnSecurityPrice}. The corresponding gain processes are then
\begin{equation}\label{EqnGain}
  G^k_t = S^k_t + \int_{[0,t)} A^k_s  \psi_s \nu(ds) ,\qquad(0\le t \le T)\,.
\end{equation}
It is of essential importance for our development that these expectations are themselves analytic functions of time and state variable jointly.

\begin{theorem}\label{ThmAnalyticAssets}
  Define $S$ by \rref{EqnSecurityPrice}. \frederik{Under Assumptions 1 to 5, \ref{margfel} ii), and 7,} there exist continuous functions
  $s: [0,T] \times \mathbb R^K \to \mathbb R_+$ that are analytic on $(0,T)\times\mathbb R^K$ and
  $$S_t = s(t,X_t)\,.$$
  The first derivatives with respect to $x$, $\frac{\partial s}{\partial x_l}$ are continuous on $[0,T] \times \mathbb R^K$ and we have
  $$\lim_{t \uparrow T} \frac{\partial s}{\partial x_l}(t,x)=\frac{\partial s}{\partial x_l} (T,x)=\frac{\partial g}{\partial x_l}(T,x)$$
\end{theorem}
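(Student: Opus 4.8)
The plan is to exploit that $(t,X_t)$ is a Markov process: represent $S^k$ as a deterministic function of time and state built from the transition semigroup of $X$, and then transport the analyticity of the data through that semigroup. Write $\psi_s=\psi(s,X_s)$ and $A^k_s=g^k(s,X_s)$, with $\psi$ and $g^k$ analytic on $(0,T)\times\RR^K$ by Theorem~\ref{ThmArrowDebreuEquilibrium} and Assumption~\ref{AssDividendAnalytic}; put $f^k:=g^k\psi$ and let $\E^{t,x}$ denote expectation under the law of $X$ started from $x$ at time $t$. By \eqref{EqnSecurityPrice} and the Markov property, $S^k_t=s^k(t,X_t)$, where
\[
  s^k(t,x)=\int_t^T \E^{t,x}\big[f^k(r,X_r)\big]\,dr+\E^{t,x}\big[f^k(T,X_T)\big],
\]
and $s^k(T,\cdot)=g^k(T,\cdot)$ is the terminal dividend. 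Boundedness of the aggregate endowment, Assumption~\ref{margfel} ii), makes $\psi$ bounded, so that together with $A^k\in\cX$, $\psi\in\Psi$ and Assumption~\ref{AssSDECoefficientsBoundedHoelderContinuous} all the expectations below are finite and satisfy the locally uniform bounds that let one differentiate under the integral.

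The analytic heart of the argument is the claim: for $0\le t<s\le T$ the map $(t,x)\mapsto u(t,x):=\E^{t,x}[\phi(X_s)]$ is real-analytic in $(t,x)$ whenever $\phi$ is analytic on $\RR^K$ with controlled growth, and moreover all of its $(t,x)$-derivatives are bounded locally uniformly in $(t,x,s)$ up to the diagonal $\{t=s\}$. I would establish this through the transition density. Under Assumptions~\ref{AssSDE} and~\ref{AssSDECoefficientsBoundedHoelderContinuous} the operator $\cL:=\tfrac12\sum_{ij}a_{ij}(x)\partial_{x_i x_j}+\sum_i b_i(x)\partial_{x_i}$ is uniformly parabolic with bounded, H\"older and analytic coefficients, hence the fundamental solution $p(t,x;s,y)$ of $X$ exists and, by the parametrix (Levi) construction carried through with analytic data, is real-analytic in $(t,x)$ on $\{t<s\}$ and obeys Gaussian upper bounds together with matching bounds on its $(t,x)$-derivatives; then $u(t,x)=\int_{\RR^K}\phi(y)\,p(t,x;s,y)\,dy$, and differentiating under the integral (the Gaussian decay absorbing the growth of $\phi$) yields the analyticity and the locally uniform bounds, which do not degenerate as $s\downarrow t$ precisely because $\phi$ is smooth. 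Equivalently one may argue by PDE: $u$ is, by the Kolmogorov backward equation and interior Schauder estimates, the classical solution of $\partial_t u+\cL u=0$ on $(0,s)\times\RR^K$ with $u(s,\cdot)=\phi$, and $\partial_t+\cL$ is analytic-hypoelliptic.

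Granting the claim, $s^k$ is assembled as follows. The lump-sum term $\E^{t,x}[f^k(T,X_T)]$ is analytic on $(0,T)\times\RR^K$ at once. For the integral term, substitute $r=t+(T-t)\rho$, $\rho\in[0,1]$, to fix the domain of integration, obtaining $\int_0^1 (T-t)\,\E^{t,x}[f^k(t+(T-t)\rho,X_{t+(T-t)\rho})]\,d\rho$; by the claim the integrand is analytic in $(t,x)$ for each $\rho\in(0,1]$, with $(t,x)$-derivatives bounded uniformly in $\rho$ on a neighbourhood of any $(t_0,x_0)\in(0,T)\times\RR^K$, so one may differentiate under $\int_0^1 d\rho$ and check that the multivariate Taylor series of $s^k$ at $(t_0,x_0)$ converges on a polydisc. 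Hence $s:=(s^k)_{k}$ is analytic on $(0,T)\times\RR^K$. Joint continuity of $s$ on all of $[0,T]\times\RR^K$ (including $t=0$ and $t=T$) follows from dominated convergence and continuity of $(t,x)\mapsto\E^{t,x}[\,\cdot\,]$, using that $f^k$ is continuous up to the boundary; in particular $s(T,\cdot)=g(T,\cdot)$.

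For the first spatial derivatives up to the terminal boundary, differentiate through the flow: $\partial_{x_l}\E^{t,x}[\phi(X^{t,x}_r)]=\E\big[\nabla\phi(X^{t,x}_r)\cdot\partial_{x_l}X^{t,x}_r\big]$, where the first-variation process $\partial_{x_l}X^{t,x}_r$ solves the linearised SDE and converges to the $l$-th unit vector as $r\downarrow t$; hence $\partial_{x_l}\E^{t,x}[\phi(X^{t,x}_r)]\to\partial_{x_l}\phi(x)$ as $r\downarrow t$, locally uniformly. Applying this with $r=T$ to the lump-sum term and with $r\in[t,T]$ to the integrand (whose contribution to $\partial_{x_l}s^k$ is $O(T-t)$, locally uniformly), together with joint continuity of these derivatives up to $t=0$ and $t=T$ from the same estimates and dominated convergence, yields that $\partial_{x_l}s$ is continuous on $[0,T]\times\RR^K$ and $\lim_{t\uparrow T}\partial_{x_l}s(t,x)=\partial_{x_l}s(T,x)=\partial_{x_l}g(T,x)$. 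The main obstacle is the analytic core of the second paragraph: real-analyticity of the transition density in $(t,x)$ with derivative bounds that stay locally uniform up to the diagonal $t=s$; everything else is bookkeeping around that fact.
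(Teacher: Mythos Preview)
Your route is genuinely different from the paper's. The paper does not work with the transition density at all; instead it first observes (via Friedman and Heath--Schweizer) that $s^k$ is a bounded classical ${\cal C}^{1,2}$ solution of the inhomogeneous backward equation $-\partial_t u+\cL u=m^k$ with terminal data $m^k(T,\cdot)$, then conjugates by a weight $d(x)=\exp(-l(x))$ (with $l(x)=\|x\|$ for large $\|x\|$) to bring the problem into $L^2(\RR^K)$, where the conjugated generator $\tilde\cL$ is sectorial; analytic--semigroup theory (Yosida, Pazy, Lunardi, Friedman) then yields analyticity in $t$. Analyticity in $x$ is obtained from a separate elliptic/parabolic regularity result of Friedman, and \emph{joint} analyticity follows from Osgood's theorem on bounded separately analytic functions. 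Continuity of $\partial_{x_l}s$ up to $t=T$ is read off from another Friedman reference rather than via the first--variation process you use.

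Your proposal is more direct---joint $(t,x)$--analyticity from the start, via the fundamental solution---and you are candid that the ``analytic core'' (uniform Cauchy--type estimates on all $(t,x)$--derivatives of $\E^{t,x}[\phi(X_s)]$ that persist as $s\downarrow t$, so that integrating in $\rho$ preserves a common polydisc of convergence) is the real work. That gap is genuine: merely bounding each derivative uniformly in $\rho$ is not enough---you need factorial--type bounds $|\partial^\alpha(\cdot)|\le C\,M^{|\alpha|}\alpha!$ uniform in $\rho$ to conclude that the integrated Taylor series converges. The paper's Remark immediately after the proof is also relevant to your alternative suggestion of invoking analytic hypoellipticity: Piccinini's counterexample shows that parabolic operators are \emph{not} analytic--hypoelliptic for the inhomogeneous equation in dimension $K\ge 2$, so one cannot simply cite hypoellipticity of $\partial_t+\cL$ on the full Cauchy problem; the boundedness coming from Assumption~\ref{margfel}~ii) is doing essential work in either approach. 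If you can supply the uniform analytic estimates on the transition density (e.g.\ Eidelman--type bounds on the parametrix series with analytic coefficients, or analytic--vector arguments for the semigroup acting on analytic $\phi$), your argument would go through and would arguably be more transparent than the paper's separate--variables--then--Osgood detour; without them, what you have is a plausible plan rather than a proof.
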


\begin{proof}
As $X$ is a Markov process, $ \frederik{S}^k_t$ is a function $\frederik{s}^k$ of time $t$ and state $X_t$, that is
$$ \frederik{s}^k(t,x)= \E\left[ \int_{\frederik{t}}^T m^k(s,X_s)   \,  \nu(ds) | X_t=x \right]
$$
for $m^k(t,x)=g^k(t,x)\psi(t,x)$\frederik{ (the existence of such a function $\psi$ was established in Theorem~\ref{ThmArrowDebreuEquilibrium})}.

By Assumption \ref{margfel} ii) $m^k$ are bounded functions, and hence $s^k$ is bounded as well.
 By Theorem 5.3 of Chapter 6 on p. 148 of \cite{FriedmanSDE1}, or alternatively, by \cite{HeathSchweizer00} $s^k$ is a classical ${\cal C}^{1,2}$--solution of  the Cauchy problem
 $$ -  \frac{\partial}{\partial t } u +  \cL u = m^k$$ with boundary condition $s^k(T,x)=m^k(T,x)$.

Now let $d(x)=\exp(-l(x))$ for a smooth function $l$ on $\mathbb R^n$ that satisfies $l(x)=\|x\|$ for $\|x\|\ge 1$ and set $u(t,x)=s^k(t,x) d(x)$. Then $u(T,x)=m^k(T,x) d(x) \in L^2(\mathbb R^n)$ and $u$ solves the modified Cauchy problem
 $$ -  \frac{\partial}{\partial t } u +  \tilde{\cL } u = f$$
for some suitable function $f$ and elliptic operator $\tilde{\cL}$ with
$$\tilde{\cL} (v d) = d \cL v \qquad (v \in {\cal C}^\infty (\mathbb R^n) \frederik{)}\,.$$

As $\tilde{\cL}$ is a sectorial operator on $L^2(\mathbb R^n)$\frederik{ (see, for instance, Proposition 3.1.17 of \cite{Lunardi95} and Theorem 5.2 in Section 2.5 of \cite{Pazy83})},
we can apply standard results on evolution equations in Banach spaces to conclude that $u$, and then $s^k$, is analytic in time $t$. (For example, the Corollary on page 209 in \cite{FriedmanPDE} applies. We can take the Sobolev space $X={\mathbb W}^2_2$ as the Banach space there. Our operator $\cal L$ has $X$ as its domain. Hence, condition (E1) there is satisfied. Condition (E2) requires that the resolvent of the Markov process exists in some complex sector around zero. However, this has been proven in Eq. (2.11), Theorem 1 of \cite{Yosida59}, or see the references above. (E3) in that book is automatically satisfied as our operator is independent of time.)

Finally, note that $s^k$ is analytic in $(x,y)$ by Theorem~1.2 in Part~3, Chapter~1 of \cite{FriedmanPDE} and recall that functions which are bounded and separately analytic are jointly analytic (a result of \cite{Osgood1899}).

The continuous differentiability of $s^k$ with respect to the second argument $x$ follows from Theorem 10.3 on p.~143 of \cite{FriedmanPDE}.
\end{proof}

\frederik{

\begin{remark}
Perhaps the reader is wondering whether one might obtain analyticity directly by invoking an appropriate general result from the theory of partial differential equations. To be sure, the function $s$ in Theorem~\ref{ThmAnalyticAssets} solves an inhomogeneous parabolic differential equation (see again Theorem 5.3 in Chapter 6 of \cite{FriedmanSDE1}, and indeed there does exist a body of literature on analyticity of solutions to linear second-order partial differential equations. For example, \cite{DeGiorgiCattabriga71} proved that for the special case of (space-time) dimension $2$ (i.e. in our setting $K=1$) and constant coefficients, any solution to an inhomogeneous linear differential equation with analytic right-hand side (in our case, this is the product of the pricing density, as a function of $X$ and the dividend function $g$) is again analytic, hence if our process $X$ is just a one-dimensional Brownian motion with drift, then analyticity of $s$ follows. However, as already conjectured by \cite{DeGiorgiCattabriga71}, this result fails to hold in general for higher dimensions. A counterexample is the heat equation with two-dimensional Laplacian, as was proved by \cite{Piccinini73}, using a right-hand side which grows, while still analytic, at enormous pace. Therefore, one cannot, in general, dispense of growth conditions like those imposed by \cite{AndersonRaimondo08} in their Theorem~B.4. These negative results have been generalized in numerous papers, in particular by \cite{Hoermander73} (for the case of partial differential operators with constant coefficients) and \cite{OleinikRadkevic73} (for the case of partial differential operators with analytic coefficients).

\cite{OleinikRadkevic82} do give sufficient conditions for analyticity of solutions of all solutions --- in the distribution sense --- of inhomogeneous second-order linear partial differential equations with analytic right-hand side. However, these conditions are limited to the two-dimensional case (which would mean $K=1$ in our setting) and involve the assumption that the equation can be transformed into another partial differential equations where there are second-order diagonal terms in both variables (in our case this would imply a second-order time-derivative), see Theorem~2 and Theorem~3 of \cite{OleinikRadkevic82}.
\end{remark}

}

\subsection{Dynamically Complete Markets}

\begin{theorem}
  \frederik{Under Assumptions 1 to 5, \ref{margfel} ii), and 7,} the market $(S,A,\psi)$ is dynamically complete.
\end{theorem}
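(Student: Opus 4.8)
The plan is to show that the gain process $G=(G^0,\dots,G^K)$ has, up to a nonsingular change of "num\'eraire," a volatility matrix that is invertible $P\otimes\nu$-a.e., and then invoke the standard martingale-representation argument to conclude that every net consumption plan can be financed. The starting point is Theorem~\ref{ThmAnalyticAssets}: the security prices are $S^k_t=s^k(t,X_t)$ with $s^k$ analytic on $(0,T)\times\mathbb R^K$, jointly $\mathcal C^1$ up to $t=T$, and with $\lim_{t\uparrow T}\partial_{x_l}s^k(t,x)=\partial_{x_l}g^k(T,x)$. Applying It\^o's formula to $G^k_t=s^k(t,X_t)+\int_{[0,t)}g^k(s,X_s)\psi(s,X_s)\,\nu(ds)$ and using \eqref{EqnX}, the diffusion part of $G^k$ is $\nabla_x s^k(t,X_t)^\top\sigma(X_t)\,dW_t$; write $\Sigma(t,x)$ for the $(K+1)\times K$ matrix with rows $\nabla_x s^k(t,x)^\top\sigma(x)$. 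Completeness will follow once we show that $\Sigma(t,X_t)$ has rank $K$ for $P\otimes\nu$-almost every $(\omega,t)$, because then, after picking a num\'eraire asset with strictly positive price, the relative gain processes span the full $K$-dimensional Brownian filtration, and by the standard implementation argument (cited as "standard" in the excerpt, going back to Duffie--Huang) any $z\in\mathcal X$ is financeable by an admissible $\theta$.

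The core step is therefore the rank condition, and here I would follow the \cite{AndersonRaimondo08} strategy of "propagating" the terminal full-rank Assumption~\ref{AssIndependentPayoffs} into the interior via analyticity. First, work on the open set $V\subset\mathbb R^K$ where $g^0(T,\cdot)>0$ and $Dh(\cdot)$ has full rank, with $h^k=g^k(T,\cdot)/g^0(T,\cdot)$. Since $s^0(T,x)=g^0(T,x)>0$ on $V$ and $s^0$ is continuous, $s^0>0$ on a neighborhood of $\{T\}\times V$; form the relative prices $\hat s^k(t,x)=s^k(t,x)/s^0(t,x)$, which are analytic where defined. Using the boundary derivative identity from Theorem~\ref{ThmAnalyticAssets} together with the quotient rule, one computes that as $t\uparrow T$ the Jacobian $\big(\partial_{x_l}\hat s^k(t,x)\big)_{k,l=1}^{K}$ converges to $Dh(x)$, which is nonsingular on $V$. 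Hence the analytic function $x\mapsto\det\big(\partial_{x_l}\hat s^k(t,x)\big)_{k,l=1}^{K}$ is not identically zero near the boundary, so by analyticity (the zero set of a non-trivial real-analytic function on a connected open set has Lebesgue measure zero) it is nonzero outside a closed null set $N\subset(0,T)\times\mathbb R^K$. On the complement of $N$, the matrix $\big(\partial_{x_l}\hat s^k\big)_{k,l=1}^K$ is invertible; combining this with $s^0>0$ shows $\big(\nabla_x s^k(t,x)\big)_{k=0}^{K}$ has rank $K$ there, and then uniform ellipticity (Assumption~\ref{AssSDE}(2)), which makes $\sigma(x)$ invertible, gives that $\Sigma(t,x)$ has rank $K$ off $N$.

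It remains to check that $X$ spends no time in the bad set, i.e. $P\otimes\nu\{(\omega,t):(t,X_t(\omega))\in N\}=0$. This is where uniform ellipticity enters a second time: under Assumptions~\ref{AssSDE} and \ref{AssSDECoefficientsBoundedHoelderContinuous} the law of $X_t$ has a density for each $t\in(0,T)$ (indeed a smooth transition density, and the support is all of $\mathbb R^K$ by \cite{StroockVaradhan72}), so for fixed $t$, $P\{X_t\in N_t\}=0$ since $N_t$ is Lebesgue-null; integrating in $t$ via Fubini kills the whole set, and the terminal time $t=T$ is a $\nu$-atom but $N_T$ is handled by the limiting Jacobian argument above (or simply absorbed since $\{X_T\in\cdot\}$ also has a density). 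Thus $\Sigma(t,X_t)$ has full rank $K$ for $P\otimes\nu$-a.e. $(\omega,t)$.

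Finally, I would assemble the implementation. Fix the num\'eraire asset $0$ (price $s^0(t,X_t)$, which is a.s. strictly positive on $[0,T]$ by continuity and positivity at $T$ plus the no-arbitrage relation \eqref{EqnSecurityPrice} with $\psi>0$ and $g^0(T,\cdot)>0$ near the relevant states — here one may need the mild observation that the equilibrium state process stays where $s^0>0$, again by the support/density argument). Given $z\in\mathcal X$, consider the $\psi$-discounted total-wealth martingale $M_t=\condexp{\int_0^T z_u\psi_u\,\nu(du)}$; by Brownian martingale representation $M$ has an integrand $\phi_t\in\mathbb R^K$, and solving the linear system $\Sigma(t,X_t)^\top\theta_t=\phi_t$ pointwise (possible off the null set, using a measurable right inverse of $\Sigma^\top$) together with an appropriate choice of the num\'eraire holding to match the drift yields a predictable, $G$-integrable $\theta$ whose value process satisfies the required budget identity; boundedness of $\varepsilon$ and of $\psi$ (Assumption~\ref{margfel} ii)) make the value process bounded below by a martingale, so $\theta$ is admissible. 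The main obstacle I anticipate is the clean passage from the boundary derivative identity to invertibility of the relative-price Jacobian near $t=T$ and the bookkeeping that the relevant equilibrium states lie in the region where $s^0>0$; the measure-zero / analyticity steps and the martingale-representation implementation are essentially routine once the rank condition is in hand.
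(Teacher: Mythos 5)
Your proposal is correct and takes essentially the same route as the paper: use asset $0$ as num\'eraire, propagate the terminal full--rank condition of Assumption~\ref{AssIndependentPayoffs} into $(0,T)\times\mathbb R^K$ via analyticity of the relative--price Jacobian and the boundary derivative identity of Theorem~\ref{ThmAnalyticAssets}, and deduce completeness from a.e.\ invertibility of the volatility matrix. The differences are only presentational: the paper argues by contradiction (a positive--measure zero set of the analytic determinant would force it to vanish identically and hence at $t=T$, contradicting Assumption~\ref{AssIndependentPayoffs}) and delegates the implementation step to Karatzas and Shreve, whereas you argue directly that the determinant's zero set is Lebesgue--null and spell out the transition--density/Fubini step and the martingale--representation construction explicitly.
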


\begin{proof}
  By Assumption \ref{AssIndependentPayoffs}, $\psi>0$ and the fact that $X$ has full support, we have $S^0_t>0$ a.s. Hence, we can take asset $0$ as a num\'{e}raire.
  Define $$R^k_t=\frac{S^k_t}{S^0_t}\,.$$
  By Theorem \ref{ThmAnalyticAssets}, $R^k_t=r^k(t,X_t)$ for continuous functions
  $r^k: [0,T] \times \mathbb R^K \to \mathbb R_+$ that are analytic on $(0,T)\times\mathbb R^K$, $k=1,\ldots,K$.

  After this change of num\'{e}raire, we have a riskless asset (with interest rate $0$, of course) and $K$ risky assets, as many as independent Brownian motions.

  The asset market is dynamically complete if the volatility matrix is a.s. invertible (see, e.g., \cite{KaratzasShreve98}, Theorem 1.6.6)\footnote{To apply this result, we check quickly that  the asset market is also standard in the sense of \cite{KaratzasShreve98}:
  by construction (\rref{EqnSecurityPrice}), the gain processes are martingales; hence, our market is arbitrage--free. As our state--price deflator $\psi$ is in $\Psi$, also the martingale condition in \cite{KaratzasKou98} is satisfied.}. By It\^{o}'s lemma, the volatility matrix is given by
  $  I(t,x) Dr(t,x) \sigma(t,x) $  where $Dr$ is the Jacobian matrix of $r$ and
   $I$ the triangular matrix
   $$
  I(t,x)=
  \begin{pmatrix}
    \frac{1}{r_1(t,x)} & \ldots    & 0 \\
    \vdots & \ddots & \vdots \\
    0 & \ldots & \frac{1}{r_K(t,x)} \\
  \end{pmatrix}\,.$$

  Now suppose that the volatility matrix has determinant $0$ on a set of positive Lebesgue measure. By analyticity and Theorem B.3 in \cite{AndersonRaimondo08}, we conclude that the determinant vanishes everywhere on $(0,T)\times \mathbb R^K$. As $Dr$, $r$,  and $\sigma$ are continuous on $[0,T]$, it then follows that
  $$ \det I(T,x) Dr(T,x)\sigma(T,x)  = 0\,.$$
  (For $Dr$ and $r$, this is Theorem \ref{ThmAnalyticAssets}.)
  As $\sigma$ has full rank by Assumption \ref{AssSDE} and $I(T,x)$ is triangular, we conclude that
  $$ \det Dr(T,x)  = 0 \,.$$ But $r(T,x)=g(T,x)/g^0(T,x)=h(x)$, so
  $$ \det Dr(T,x)  \not= 0$$ on a set of positive measure by Assumption \ref{AssIndependentPayoffs}.
  This contradiction shows that the volatility matrix is invertible a.s. We conclude that the market $(S,A,\psi)$ is dynamically complete.

\end{proof}

With dynamically complete asset markets, it is a standard argument to show that the Arrow--Debreu equilibrium can be implemented as a Radner equilibrium. The basic argument is as in \cite{DuffieHuang85}, translated to our more complex setting, see also \cite{DanaJeanblanc03}, Theorem 7.1.10 (apply this theorem to the asset market with asset $0$ as num\'{e}raire).

\bibliography{bibliography_riedelherzberg}

\end{document}